\documentclass{llncs}
\usepackage[latin1]{inputenc}
\usepackage{amsmath,amssymb}
\usepackage{amsfonts}
\usepackage{upgreek}
\renewcommand{\subparagraph}{}
\usepackage{url}
\usepackage{multirow}  
\usepackage{wrapfig}   
\usepackage{graphicx}  
\usepackage{subfig}    
\usepackage{rotating}  
\usepackage{ulem}
\usepackage{enumerate}   
\usepackage{algorithm}
\usepackage{algorithmic} 
\usepackage[dvips,bookmarks=true,
            plainpages=false,naturalnames=true,
            colorlinks=true,
            linkcolor=black,citecolor=black,urlcolor=black]
        {hyperref}

\newenvironment{keywords}{
       \list{}{\advance\topsep by0.35cm\relax\small
       \leftmargin=1cm
       \labelwidth=0.35cm
       \listparindent=0.35cm
       \itemindent\listparindent
       \rightmargin\leftmargin}\item[\hskip\labelsep
       \bfseries Keywords:]}
{\endlist}

\title{Best Effort and Practice Activation Codes}
\author{Gerhard de Koning Gans\inst{1} and Eric R. Verheul\inst{1,2}}
\institute{Institute for Computing and Information Sciences \\
  Radboud University Nijmegen\\
  P.O. Box 9010, 6500 GL \ Nijmegen, The Netherlands\\
  \email{\{gkoningg,eric.verheul\}@cs.ru.nl} \and
  PricewaterhouseCoopers Advisory\\
  P.O. Box 22735, 1100 DE \ Amsterdam, The Netherlands\\
  \email{eric.verheul@nl.pwc.com}
}

\pdfminorversion=5

\begin{document}
\normalem

\maketitle

\newcommand{\ACS}{\mathcal{S}}
\newcommand{\ACA}{\mathcal{A}}
\newcommand{\ACL}{\lambda}
\newcommand{\ACN}{\mathcal{N}}
\newcommand{\ACP}{\mathcal{P}}
\newcommand{\ACK}{\mathcal{K}}

\newcommand{\Prob}[1]{\mathrm{Pr}[#1]}
\newcommand{\Rand}[1]{\mathrm{Rand}(#1)}

\newcommand{\HMAC}[1]{\mathrm{HMAC}(#1)}
\newcommand{\BEPAC}{BEPAC }

     \newcommand{\lsup}[2]{%
        \ensuremath{{}^{#2}\!{#1}}}

\begin{abstract}
Activation Codes are used in many different digital services and known by many different names including voucher, e-coupon and discount code. In this paper we focus on a specific class of ACs that are short, human-readable, fixed-length and represent value. Even though this class of codes is extensively used there are no general guidelines for the design of Activation Code schemes. We discuss different methods that are used in practice and propose BEPAC, a new Activation Code scheme that provides both authenticity and confidentiality. The small message space of activation codes introduces some problems that are illustrated by an adaptive chosen-plaintext attack (CPA-2) on a general 3-round Feistel network of size $2^{2n}$. This attack recovers the complete permutation from at most $2^{n+2}$ plaintext-ciphertext pairs. For this reason, BEPAC is designed in such a way that authenticity and confidentiality are independent properties, i.e. loss of confidentiality does not imply loss of authenticity.
\end{abstract}

\begin{keywords}
activation code, e-coupon, voucher, Feistel network, small domain encryption, financial cryptography.
\end{keywords}

\section{Introduction}
This paper introduces Activation Codes (ACs) as a generic term for codes that are used in many different digital services. They are known by many different names including voucher, e-coupon and discount code. The common properties of these codes are that they need to be short, human-readable, have a fixed length and can be traded for economic benefit. There are schemes~\cite{blundo2002lpg,blundo2005secure,cimato-online} that include all kinds of property information in the code itself or include digital signatures~\cite{kumar-sales,jakobsson1999secure}. This makes the codes unsuitable for manual entry and thus for printing on products, labels or receipts. The focus of this paper is on ACs that can be printed and manually entered such as the AC that is printed on a receipt in Figure~\ref{pic:acexample}. In this case the customer can enter the AC `\texttt{TY5FJAHB}' on a website to receive some product. We propose a scheme called BEPAC to generate and verify this class of ACs. BEPAC is an acronym for Best Effort and Practice Activation Codes. Here `\emph{best practice}' covers the use of a keyed hash function to satisfy authenticity and `\emph{best effort}' covers the use of a Feistel network to satisfy confidentiality.

\begin{wrapfigure}{r}{0.45\textwidth}
  \vspace{-20pt}
    \includegraphics[width=0.44\textwidth]{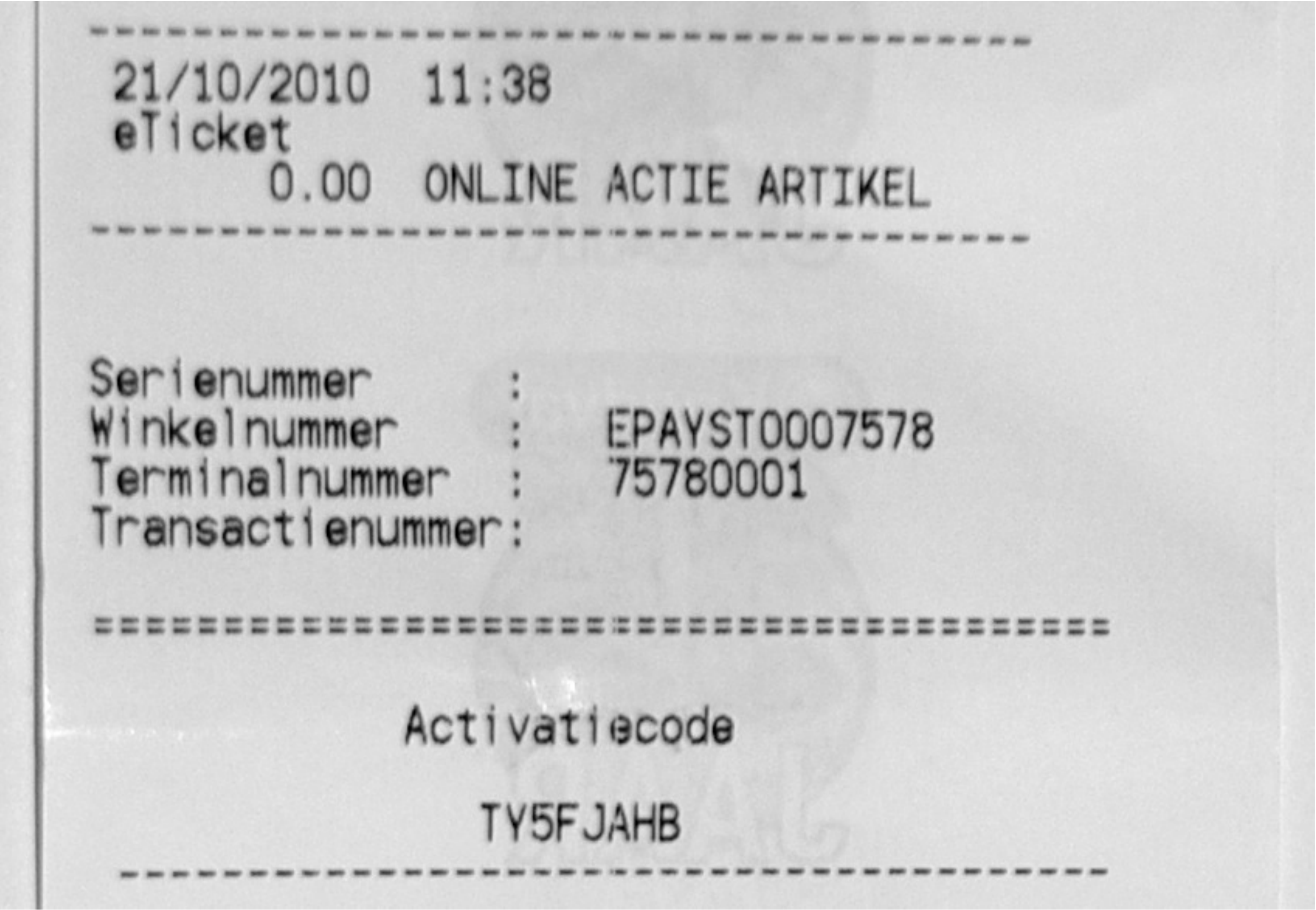}
  \caption{Activation Code}\label{pic:acexample}
  \vspace{-20pt}
\end{wrapfigure}
Security plays an important role in the design of an AC system because of the economic value it represents. A system breach could result in big financial losses.
Nevertheless, to the best of our knowledge, there are no guidelines on the design of secure AC systems that consider the previously mentioned properties.
Despite the lack of general guidelines for good practice, ACs are extensively used.
This underlines the need for a proper AC scheme that relies on elementary, well-studied, cryptographic primitives to provide authenticity and confidentiality.
First, we discuss some examples that illustrate the need for a scheme that provides both confidentiality and authenticity. Then, we give a general definition of an AC scheme and use it as a reference throughout this paper.

\paragraph{Our Contribution}
This paper addresses some known methods that are used to generate ACs and proposes BEPAC, an AC scheme that combines best effort with best practice. BEPAC is based on well-studied cryptographic primitives to guarantee unique and authentic codes that provide a satisfactory level of confidentiality. Confidentiality is obtained by a Feistel construction. The Feistel construction has weak theoretical bounds when it is used on a small domain, therefore we do not rely on it for authenticity. We use the work of Black and Rogaway~\cite{black2002caf} on small domain encryption and make some small changes to achieve confidentiality. A practical attack on a general 3-round Feistel network is presented to demonstrate the weak bounds of the Feistel construction. For authenticity, we solely rely on a keyed-hash message authentication code (HMAC) of the serial number, where the size of this HMAC determines the probability of successfully guessing a valid AC. This separated design allows a separate analysis of both confidentiality and authenticity. An advantage of this approach is that authenticity is not automatically compromised when confidentiality is broken. Finally, a BEPAC solution fits on a smart card and therefore allows AC generation and clearance to be performed in a controlled environment.

\paragraph{Related Work}
Black and Rogaway~\cite{black2002caf} propose the Generalized Feistel Cipher (GFC) as a solution to small domain encryption. This elegant solution can be used to construct a permutation on any finite domain. In the BEPAC scheme we use their method in a slightly adapted way and solely to provide confidentiality.
Black and Rogaway provide an adapted proof of Luby~\cite{luby1996pac} to prove secrecy of a 3-round Feistel network.
However, in their example configuration, the single DES round function does not give the $3 \times 56$-bit security since single DES can be broken by exhaustive key search~\cite{quisquater2005eks}. Moreover, in this setting it can be broken round by round which actually means that we have 58-bit security. Bellare et al.~\cite{spies2009} propose 128-bit AES as a pseudo-random function which drastically increases the effort needed to break one round of the Feistel network by brute-force. However, the adaptive chosen-plaintext attack (CPA-2) on a 3-round Feistel network presented in this paper shows that the key length of the pseudo-random function used in each round does not have any influence on the attack complexity. Research on Feistel networks~\cite{patarin2004security,patarin91new,naor1999cpp,luby1996pac,knudsen2008security,patarin1998feistel} has resulted in theoretical security bounds. Feistel constructions of six or more rounds are secure against adaptive chosen plaintext and chosen ciphertext attacks (CPCA-2) when the number of queries $m \ll 2^{n}$, see~\cite{patarin2004security,knudsen2008security}.
There is more related literature on the design of ACs, but to the best of our knowledge there are no proposals for the class of AC schemes that we discuss in this paper.
Blundo et al.~\cite{blundo2002lpg,blundo2005secure,cimato-online} introduce an e-coupon which is 420 bytes in size. This scheme uses a message authentication code (MAC) over some characterizing data like the identity of the manufacturer, name of the promoted product, expiry date etc. The resulting e-coupons contain valuable information but are too large to be entered manually by a user. In the work of Kumar et al.~\cite{kumar-sales} and Jakobssen et al.~\cite{jakobsson1999secure} a coupon is basically a digital signature which also means that they describe relatively large e-coupons.
Chang et al.~\cite{chang2006secure} recognize the problem of efficiency and describe a scheme that is more suitable for mobile phones that have less processing power. On the one hand, they circumvent the use of public key cryptography which reduces the computational complexity, but on the other hand, their scheme describes relatively long codes. None of the schemes described previously satisfy the requirement of short codes that can be entered manually.

Matsuyama and Fujimura~\cite{matsuyama1999distributed} describe a digital ticket management scheme that allows users to trade tickets. The authors discuss an account based and a smart card based approach and
try to treat different ticket types that are solely electronically circulated. This in contrast with BEPAC which focuses on codes that can easily be printed on product wrappings. Our intention is not to define a trading system where ACs can be transfered from one person to another. Such contextual requirements are defined in RFC 3506 as Voucher Trading Systems (VTS). Terada et al.~\cite{terada2000copy} come with a copy protection mechanism for a VTS and use public key cryptography. This makes the vouchers only suitable for electronic circulation. Furthermore, RFC 3506 and~\cite{matsuyama1999distributed} do not discuss methods on how to generate these vouchers securely. We propose the BEPAC scheme in order to fill this gap.


\section{AC Scheme Selection}\label{sec:constructions} 
This section first discusses some examples of AC systems. Then, two different approaches to set up an AC scheme are discussed and their main drawbacks are visited. After this, the Generalized Feistel Cipher of~\cite{black2002caf} is introduced in Section~\ref{secsmalldom} which has some useful concepts that we use in our scheme. The focus of an AC scheme design lays on scalability, cost-efficiency and off-line use. Finally, forgery of ACs should be hard, an adversary is only able to forge ACs with a very small predefined probability.

\paragraph{Examples of Activation Codes}
First, we discuss some examples of ACs in real life. A good first example is the \emph{scratch prepaid card} that is used in the telecommunication industry. To use a prepaid card, the customer needs to remove some foil and reveals a code that can be used to obtain mobile phone credits.
Then we have the \emph{e-coupon} which is a widely used replacement for the conventional paper coupon. An e-coupon represents value and is used to give financial discount or rebate at the checkout of a web shop.
The last example is a \emph{one-time password} that gives access to on-line content. This content should only be accessible to authenticated people who possess this unique password; think of sneak previews of new material or software distribution.

All the aforementioned examples use unique codes that should be easy to handle, that is, people should be able to manually copy ACs without much effort.
At the same time, it should be impossible for an adversary to use an AC more than once or autonomously generate a new valid AC.
Altogether, ACs are unique codes that have to guarantee authenticity. An AC system provides authenticity when an adversary is not able to forge ACs. It is a misconception that authenticity only is enough for an AC scheme. In the end, most AC systems are used in a competitive environment. When vendor $A$ starts a campaign where ACs are used to promote a product and provide it for free to customers, then it is not desirable that vendor $B$ finds out details about this campaign like the number of released ACs. Other sensitive details might be the value that different ACs represent or the expiry date of ACs. It is for this reason that we need confidentiality, which means that an adversary is not able to recognize patterns or extract any information from a released AC.

Many systems use codes that need to provide the properties discussed above. In this paper we refer to all these codes as ACs. By an AC scheme $\ACS$ we indicate a tuple $(\ACA,\ACN,\ACP,\ACL)$ where $\ACA$ is the size of the used alphabet, $\ACN$ is the number of desired ACs, $\ACP$ determines the probability $P = \frac{1}{\ACP}$ of an adversary guessing an AC and $\ACL$ is the length of the ACs. 

\subsubsection{Database Approach}
The database approach is very straightforward and consists of a database that contains all the released ACs and their current status. The generation of new AC entries is done by a pseudo-random function. When a customer redeems an AC, its status is set to `used'.
An advantage of the database approach is that the randomness of the ACs is directly related to the randomness of the pseudo-random generator. So, it is important to select a good pseudo-random generator, e.g. a FIPS certified one. 
On the other hand, the protection of this valuable data is still a problem. For instance, if an attacker manages to add entries to the database or is able to change the record status to `unused', it will be hard to detect this fraud in time. Also, it is necessary to check any new AC against all existing entries since there might be a collision. As a consequence, access to the complete set of ACs is needed on generation of new ACs.

\subsubsection{Block Cipher Approach}
Another approach is to use a block cipher that gives a random permutation $F:\{0,1\}^{n}\rightarrow\{0,1\}^{n}$ from serial numbers to ACs. The provider maintains a counter $i$ to keep track of the number of generated ACs. This way the authenticity of a serial number can be checked since only ACs that decrypt to a serial $< i$ are valid. A disadvantage of this method is the size of the resulting AC which is 128-bits for AES and 64-bits for 3DES. For AES, this results in a string of about 21 characters when we use numbers, upper- and lowercase characters in our alphabet. For 3DES, this is about 11 characters. Smaller block ciphers do exist, like \textsc{Katan}~\cite{de2009katan} which is 32-bits, but are not very well-studied. Furthermore, block ciphers force ACs to have a length that is a multiple of the block size $b$. An alternative could be the concept of elastic block ciphers~\cite{cook2007elastic} which is an extended scheme where variable message sizes are allowed as input. Moreover, this scheme uses well-studied block ciphers. Still, the minimal size of a plaintext message is the block size $b$ of the incorporated block cipher. So, this does not give any advantage and is still too large for our target, which is roughly 20 to 50-bit codes.

\subsection{Small Domain Ciphers}\label{secsmalldom} 
Black and Rogaway introduce the Generalized Feistel Cipher (GFC) in~\cite{black2002caf}. The GFC is designed to allow the construction of arbitrary domain ciphers. Here, arbitrary domain means a domain space that is not necessarily $\{0,1\}^{n}$. For ACs we want to use a small domain cipher where the domain size can be customized to a certain extent, therefore we look into the proposed method in~\cite{black2002caf}. Before we describe the Generalized Feistel Cipher, we briefly visit the basic Feistel construction.

\subsubsection{Feistel Network}\label{secfeistel}
\begin{wrapfigure}{r}{0.17\textwidth}
  \vspace{-20pt}
    \includegraphics[width=0.16\textwidth]{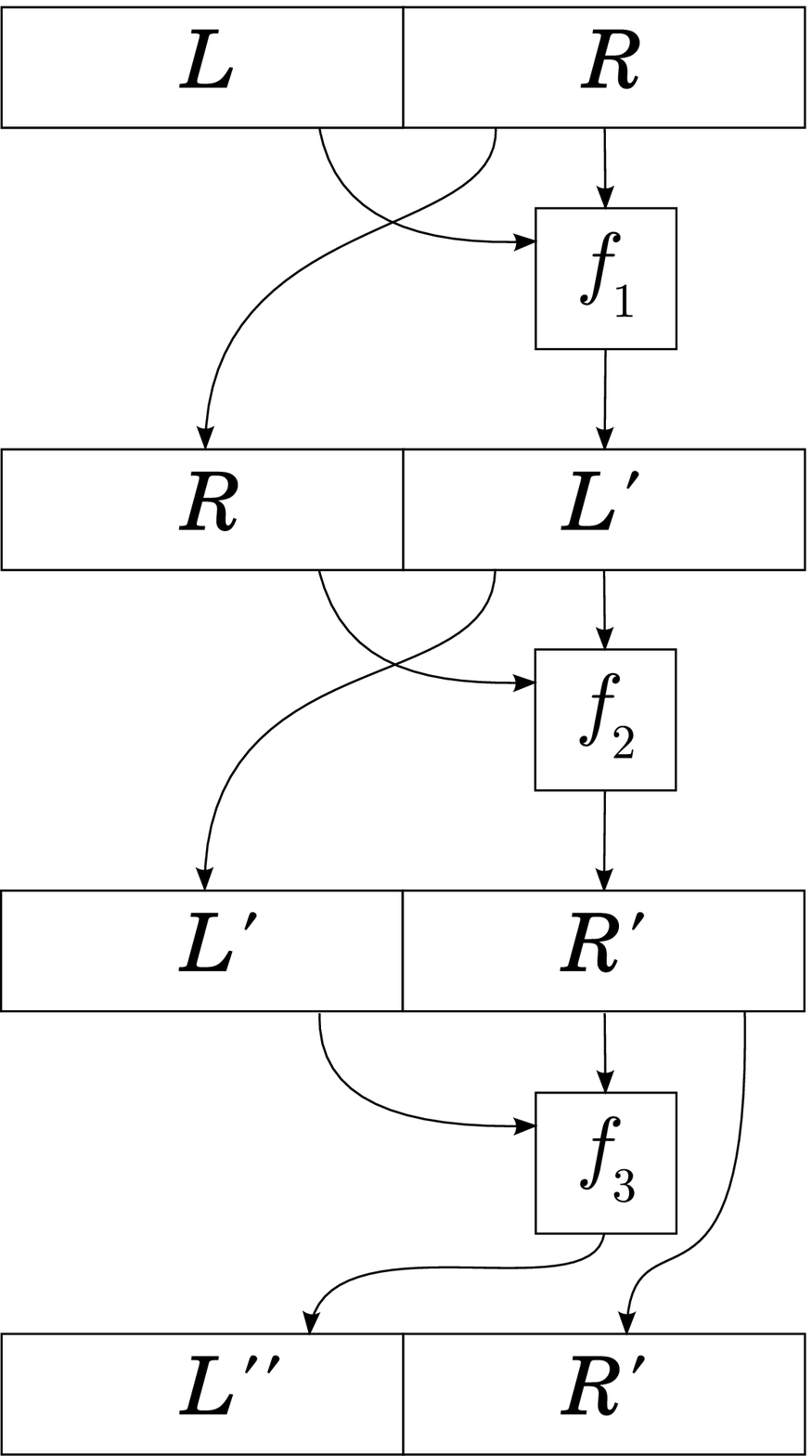}
  \caption{Feistel}\label{feistel:3rnd}
  \vspace{-40pt}
\end{wrapfigure}

A Feistel network~\cite{luby1988cpp} is a permutation that takes an input $x$ of size $2n$, then performs a number of rounds $r$ with round functions $f_{1},...,f_{r}$, and finally delivers an output $y$ of size $2n$. The input is split into two blocks $\left\langle L,R\right\rangle \in \{0,1\}^{2 \times n}$. As shown in Figure~\ref{feistel:3rnd}, every right block is input to a round function $f_i$. The output of this function is combined with the left block and becomes the new right block, e.g. $L' = f_{1}(R) + L$ for GFC. The original right block becomes the new left block. For the ease of decryption the last output blocks are swapped in case of an odd number of rounds (which is the case in Fig.~\ref{feistel:3rnd}).

\subsubsection{Generalized Feistel Cipher}
The GFC of Black and Rogaway~\cite{black2002caf} was introduced to handle flexible domain sizes. Take for example an encryption $E: 5^{14} \rightarrow 5^{14}$ which is not a domain that is captured by standard block cipher algorithms.
The BEPAC scheme borrows some of the ideas of GFC to be able to construct arbitrarily sized AC configurations.

In GFC the left and right block of the Feistel network are ``similarly sized'' which means that their domain size may deviate a little. For the particular case of ACs we have looser restrictions on the arbitrariness of our domain and we can increase the guessing probability $\ACP$ to influence the domain size. As a consequence, the system parameters of BEPAC can be chosen such that the left and right block are equally sized.

An obvious way to use the Feistel network is to create a pseudo-random permutation $F: \mathcal{K} \times \mathcal{M} \rightarrow \mathcal{M}$ where $\mathcal{K}$ and $\mathcal{M}$ are the key space and message space respectively. To generate ACs, we take as plaintext an index $i$ and use the resulting ciphertext as AC $\alpha$. In order to check $\alpha$ the provider keeps track of the last index $i$ and considers a given $\alpha$ valid when $F^{-1}(\alpha) \leq i$. This construction guarantees:
\begin{enumerate}
\item Collision-freeness, since $F$ is a permutation.
\item Valid serial numbers, they cannot be predicted since $F$ is a pseudo-random permutation.
\end{enumerate}

As Black and Rogaway already conclude in~\cite{black2002caf} the Generalized Feistel Cipher has weak security bounds when used in applications where the message space is roughly from $k = 2^{30}$ up to $k = 2^{60}$. This suggests that our second argument might not be that strong.

Also, the serial $i$ is kept secret and one might argue that this presumes unforgeability. However, the way $i$ is embedded allows an adversary to make useful assumptions about $i$ since the ACs are generated using consecutive numbers. In the GFC, the left block $L$ and right block $R$ are initiated as follows:
$$L=i\bmod 2^{n}, \qquad R=\lfloor i / 2^{n} \rfloor$$ 
Here, $L$ represents the least significant bits of $i$, and the successor of every $i$ always causes a change in $L$. On the contrary, when $i$ is sequentially incremented, the value of $R$ changes only once every $2^{n}$ times. This way, the first $2^{n}$ ACs are generated with $R=0$, the second $2^{n}$ ACs with $R=1$, etc.
The problems that this little example already points out are further explained in the next section.

\section{Feistel Permutation Recovery using CPA-2}\label{sec:genfeistelprob}
In this section we present a practical attack on a three-round Feistel construction in order to illustrate the problem of choosing a small number of rounds and using a serial embedding as suggested by Black and Rogaway~\cite{black2002caf}.


\begin{theorem}
Consider a three-round 2n-bit Feistel construction. Then there exists an algorithm that needs at most $2^{n+2}$ adaptive chosen plain-/ciphertext pairs to compute any ciphertext from any plaintext and vice versa without knowledge of the secret round keys and regardless the used key length.
\end{theorem}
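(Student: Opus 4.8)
The plan is to treat the cipher as the chain of intermediate words produced by the three rounds and to reconstruct the three round functions $f_1,f_2,f_3$ as explicit tables, from which both the encryption map $E$ and its inverse can then be evaluated on every input. Write a plaintext as $(x_0,x_1)$ and set $x_2=f_1(x_1)+x_0$, $x_3=f_2(x_2)+x_1$, $x_4=f_3(x_3)+x_2$, so that (after the final swap) the two ciphertext halves expose $x_3$ and $x_4$ while $x_2$ is the only internal value hidden from the adversary. The defining identities are then $f_1(x_1)=x_2-x_0$, $f_2(x_2)=x_3-x_1$ and $f_3(x_3)=x_4-x_2$, where $-$ denotes the inverse of the round-combining operation. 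Note that the construction is invariant under the one-parameter substitution $(f_1,f_2,f_3)\mapsto(f_1+a,\,f_2(\cdot-a),\,f_3-a)$, so one can only hope to recover the tables up to this harmless constant, which does not change the permutation; I will fix it implicitly with the first batch of queries.

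First I would recover $f_2$ in full. Fixing $x_1=0$ and letting $x_0$ run over all $2^n$ values forces $x_2$ to range over all of $\{0,1\}^n$, and since now $x_3=f_2(x_2)$, the second ciphertext half reads off $f_2$ directly (in the gauge $f_1(0)=0$). The same $2^n$ encryption queries also yield $f_3(x_3)=x_4-x_2$ for every $x_3\in\mathrm{Im}(f_2)$, i.e. $f_3$ on the image of $f_2$. The obstacle now is exactly that $f_2$ is a function, not a permutation: the hidden word $x_2$ cannot be deduced from a ciphertext by inverting $f_2$, so neither $f_1$ nor the rest of $f_3$ is yet accessible from encryption queries alone.

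To break this I would use the adaptive decryption queries permitted by the chosen plain-/ciphertext model, which let me prescribe $x_3$ and $x_4$ and thereby control the hidden word. Choosing $x_3=w$ with $w\in\mathrm{Im}(f_2)$ (so $f_3(w)$ is already known) and any $x_4$ gives $x_2=x_4-f_3(w)$, a value I can set arbitrarily; then $x_1=w-f_2(x_2)$ is predictable, and the returned plaintext half yields $f_1(x_1)=x_2-x_0$. Ranging over suitable pairs $(w,x_2)$ — one carefully chosen query per target $x_1$ — recovers $f_1$ on all of $\{0,1\}^n$ in at most $2^n$ queries, the only requirement being that every $x_1$ can be written as $w-f_2(x_2)$ with $w\in\mathrm{Im}(f_2)$, which holds whenever the difference set $\mathrm{Im}(f_2)-\mathrm{Im}(f_2)$ is everything (the degenerate cases, e.g. $f_2$ almost constant, collapse the cipher to a map with fewer unknowns and are even easier). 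With $f_1$ and $f_2$ now fully known, $x_2=f_1(x_1)+x_0$ is computable for any query, so a final pass of decryption queries that set $x_3=w$ for each remaining $w\notin\mathrm{Im}(f_2)$ gives $f_3(w)=x_4-x_2$ and completes the table for $f_3$. The three passes cost at most $2^n+2^n+2^n=3\cdot 2^n<2^{n+2}$ queries, and the recovered tables evaluate $E$ and $E^{-1}$ everywhere. The hard part, and the place I would spend the most care, is the middle step: arguing that the adaptive chosen-ciphertext queries can be scheduled so that every value of $x_1$ (and later every $x_3$) is reached within the budget, i.e. controlling the coverage of the difference set while keeping at most one query per target.
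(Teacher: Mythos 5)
Your proposal reconstructs the same three tables in the same order as the paper (gauge-fix $f_1(0)=0$ via the substitution that leaves the permutation unchanged, read off $f_2$ from the $2^n$ queries with right half fixed to zero, then recover $f_1$, then $f_3$), and your first pass is literally the paper's Equation for $f_2$. The essential divergence is the oracle model: after recovering $f_2$ you switch to \emph{decryption} queries to pin down $f_1$ and the missing part of $f_3$, whereas the paper's attack never decrypts anything --- it is a pure adaptive chosen-plaintext attack (CPA-2), which is how the result is described in the abstract and in the title of the section containing the theorem. The paper recovers $f_1$ from encryption queries alone by solving $f_2(x) = R'_{(0,j)} - R_j$ for $x$ and disambiguating collisions of the non-injective $f_2$ with a few additional queries at shifted inputs, checking $f_2(x+i) = R'_{(i,j)} - R_j$ for $i=1,\ldots,m$ and bounding the probability of residual ambiguity by $1/k^m$; it then recovers $f_3$ by computing \emph{offline}, from the now-known $f_1$ and $f_2$, a plaintext $(L_i,R_j)$ that drives $R'$ to any desired value $z$, spending one encryption query per $z$.

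So the two routes trade different things. Yours avoids the collision-disambiguation lemma entirely (with a decryption query, $f_1(x_1) = x_2 - x_0$ is exact, no ambiguity) and costs at most $3\cdot 2^n$ queries instead of $4\cdot 2^n$, but it needs two-sided oracle access and the coverage condition that $\mathrm{Im}(f_2) - \mathrm{Im}(f_2)$ is all of the block space, which you rightly flag as the delicate step; that caveat is of the same heuristic character as the paper's own probabilistic uniqueness estimate, so neither proof is airtight in the degenerate cases. The paper's route needs only an encryption oracle --- a strictly weaker assumption, and the realistic one for the activation-code application, where an adversary can request codes but has no decryption facility --- at the price of the probabilistic argument and one extra pass of queries. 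If the theorem is read as the paper intends (CPA-2 only), your use of a decryption oracle is the one substantive deviation and would have to be replaced by exactly the paper's two tricks; under the more liberal reading of ``chosen plain-/ciphertext'' that the theorem's wording permits, your argument is sound modulo the coverage caveat.
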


\begin{proof}
The two ciphertext blocks are defined in terms of the plaintext blocks as follows:

\begin{equation}\label{eq_rp}
\begin{array}{lll}
R'  & = & f_{2}(f_{1}(R) + L) + R\\
L'' & = & f_{3}(f_{2}(f_{1}(R) + L) + R) + f_{1}(R) + L\\
    & = & f_{3}(R') + f_{1}(R) + L\\
\end{array}
\end{equation}
Note that $L''$ uses $R'$ as input to $f_{3}(\cdot)$.
With $L_{i}$ we denote $L=i$ and similarly $R_{j}$ denotes $R=j$. The notation $R'_{(i,j)}$ means the value of $R'$ when $L_{i}$ and $R_{j}$ are used as input blocks. We first observe that several triples $(f_{1},f_{2},f_{3})$ lead to the same permutation and show that it is always possible to find the triple with $f_{1}(0) = 0$. To this end, if we replace the triple $(f_{1},f_{2},f_{3})$ with the triple $(f'_{1},f'_{2},f'_{3})$ defined by Equation~(\ref{eq:alttriple}), this leads to the same permutation (Equation~(\ref{eq_rp})) with the desired property that $f'_{1}(0) = 0$.
\begin{equation}\label{eq:alttriple}
f'_{1}(x) = f_{1}(x) - f_{1}(0),\qquad f'_{2}(x) = f_{2}(x + f_{1}(0)),\qquad f'_{3}(x) = f_{3}(x) + f_{1}(0)
\end{equation}
So, without loss of generality we may assume that $f_{1}(0) = 0$.

Next, we describe a method to find a triple $(f_{1},f_{2},f_{3})$ with $f_{1}(0) = 0$. First, we determine $f_{2}$, then $f_{1}$ and finally $f_{3}$.

By Equation~(\ref{eq_rp}) we get $f_2$:
\begin{equation}\label{eq:f2_sol}
f_{2}(f_{1}(R_{0}) + L_{i}) = R'_{(i,0)} - R_{0} = R'_{(i,0)} \qquad \Rightarrow \qquad f_{2}(L_{i}) = R'_{(i,0)}
\end{equation}

Now, to find $f_1$ observe that $f_{1}(j)$ is a solution for $x$ in the equation $f_2(x) = R'_{(0,j)} - R_j$. However, this equation does not always have one unique solution since $f_{2}$ is a pseudo-random function. In case of multiple solutions we compare the output of successive (wrt. $x$) function inputs with the values for $f_2(x+i)$ that were found using Equation~(\ref{eq:f2_sol}). Then, the correct $x$ is the unique solution to:
\begin{equation}\label{eq:find_f2}
f_{2}(x+i) = R'_{(i,j)} - R_{j} \qquad \text{for } i = 0, \ldots, m
\end{equation}
Sometimes $m=0$ already gives a unique solution. At the end of this section we show that with very high probability $m=1$ defines a unique solution. We find:
\begin{equation}\label{eq:f1_sol}
	f_{1}(j) = x
\end{equation}

When $f_{1}$ and $f_{2}$ are both determined, $L$ and $R$ can be chosen such that every value for $R' \in \{0,\ldots,2^{n}-1\}$ is visited. Since $R'$ functions as direct input to $f_{3}$ it is possible to find all input-output pairs for $f_{3}$.
To visit every possible input value $z=0,\ldots,2^{n}-1$ find a pair $L_{i},R_{j}$ such that $R'_{(i,j)} = z$. First, find an index $x$ such that $f_{2}(x) = z - R_{j}$. If such an $x$ does not exist choose a different value for $R_{j}$. There is always a solution for $x$ since $R_{j}$ covers the whole domain of $f_{2}$. Second, derive $L_{i}$:
\begin{equation}
	f_{2}(f_{1}(R_{j}) + L_{i}) = f_{2}(x) \qquad \Rightarrow \qquad L_{i} = x - f_{1}(R_{j})
\end{equation}
Note that the determination of $L_{i}$ and $R_{j}$ does not need any intermediate queries since it is completely determined by $f_1$ and $f_2$.
Next, we query the system with $L_{i}$ and $R_{j}$ and use Equation~(\ref{eq:f2_sol}) and~(\ref{eq:f1_sol}) to compute $f_{3}$ as follows:
\begin{equation}\label{eq:detf3}
	f_{3}(x) = L''_{(i,j)} - f_{1}(R_{j}) - L_{i}
\end{equation}
This completes the solution for a triple $(f_1,f_2,f_3)$ that results in the same permutation as the Feistel construction under attack.

\paragraph{Number of Queries}
The determination of $f_{2}$ is given by Equation~(\ref{eq_rp}) and costs $2^{n}$ queries. The determination of $f_{1}$ is given by Equation~(\ref{eq:find_f2}). The probability $p$ that there exists an $x' \neq x$ for a preselected $x$ such that
\begin{equation}\label{eq:nrqueries}
\left(f_2(x) = f_2(x')\right)\;\wedge\;\left(\bigwedge_{i = 1,\ldots,m}{f_2(x+i) = f_2(x'+i)}\right)
\end{equation}
can be split into two parts. First, we have the probability $p_1$ that there is a collision $f_2(x) = f_2(x')$ with $x \neq x'$. Then, the second probability $p_2$ covers cases where a preselected position $f_2(x+i)$ has the same value as some other preselected position $f_2(x'+i)$. We take $k = 2^n$ and the two probabilities are then given by $p_1 = 1-(\frac{k - 1}{k})^{(k-1)}$ and $p_2 = \frac{1}{k}$. Now, $p = p_1 \cdot {p_2}^{m}$ because we need to multiply by $p_2$ for every other successive match. To conclude, the probability that there is an $x' \neq x$ for a Feistel construction of size $2 \cdot k$ and with $m$ successive queries,
i.e. the probability that there is no unique solution $x$ to Equation~(\ref{eq:find_f2}), is the probability
\begin{equation}\label{eq:finalqr}
p = \frac{1}{k^{m}} - \frac{1}{k^{m}} \cdot {\left(\frac{k-1}{k}\right)}^{k-1}
\end{equation}
So, $p < \frac{1}{k^{m}}$ and depending on the size $k$, $m = 1$ already gives $p$ close to zero. In practice one might sometimes need an additional query ($m=2$) or only one query ($m=0$), but on average $m=1$.
This means that the cost for determination of $f_1$ is $2\cdot2^{n}$ queries on average.
Then, the determination of $f_{3}$ is given by Equation~(\ref{eq:detf3}) and costs at most $2^{n}$ queries. As a result, the determination of $(f_1,f_2,f_3)$ has an upper bound of $2^{n+2}$ queries.

\end{proof}


\section{BEPAC Scheme}\label{sec:bepac}
In this section we propose our Activation Code Scheme called BEPAC. Its primary objective is to ensure authenticity and its secondary objective is to provide confidentiality. Confidentiality is satisfied up to the security bounds given by Black and Rogaway in~\cite{black2002caf}. In the \BEPAC scheme, loss of confidentiality does not affect the authenticity property.

The authenticity is achieved in an obvious way by the use of an HMAC which is a keyed hash function. We take the truncated HMAC $h$ of a sequence number $i$ and concatenate it to $i$ itself. For this concatenation we use an embedding $m$ like the one used by Black and Rogaway in~\cite{black2002caf} and Spies in~\cite{spies2009}. We rely on the strength of the underlying hash function which covers the \textit{best practice} part of our solution: ACs are not forgeable. The length of an HMAC is usually too long for the ease of use that is demanded for ACs. Therefore we introduce the probability $P = \frac{1}{\ACP}$ that puts a lower bound on the success rate of guessing correct ACs. We use this parameter to limit the length of the codes, i.e. $\ACP$ determines the size of the HMAC. A lower success probability for an adversary is achieved by concatenating a bigger part of the HMAC and thus results in a longer AC.

Our solution differs from encryption schemes for small domains~\cite{black2002caf,morris2009encipher} in the sense that we make a clear separation between the part that provides authenticity and the part that provides confidentiality. The latter is added as an additional operation on the embedding $m$. We use a balanced Feistel construction as proposed in~\cite{black2002caf} to create the necessary confusion and diffusion. This separation between authenticity and confidentiality is really different from an approach where the sequence number $i$ is directly fed into a Feistel construction and when it solely depends on this construction for its authenticity.
The attack in Section~\ref{sec:genfeistelprob} demonstrates that we cannot rely on a Feistel construction for authenticity when it is used on a small domain. These results form the basis of our design decision.

\subsection{AC Scheme Setup}
\begin{wrapfigure}{r}{0.40\textwidth}
  \vspace{-30pt}
  \begin{center}
    \includegraphics[width=0.39\textwidth]{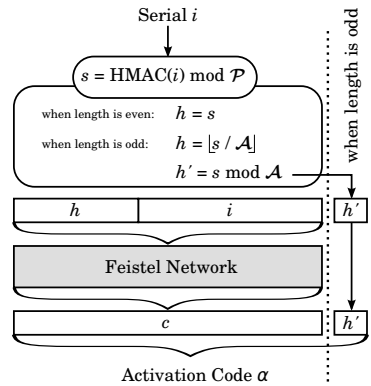}
  \end{center}
  \caption{BEPAC Scheme}\label{fig:bepac}
  \vspace{-20pt}
\end{wrapfigure}
The BEPAC scheme setup is a construction (see Fig.~\ref{fig:bepac}) where an embedding $m$ of an index $i$ and a part of $\mathrm{HMAC}(i)$ are fed into a Feistel network. Since this is a balanced Feistel network, $m$ needs to be divided into two equally sized blocks. When this is not possible a small part $h'$ of $\mathrm{HMAC}(i)$ bypasses the Feistel network and is embedded together with the cryptogram $c$ from the Feistel network to form AC $\alpha$.

The BEPAC scheme $\ACS$ is a tuple $(\ACA,\ACN,\ACP,\ACL,\omega)$ where $\ACA$ is the size of the alphabet, $\ACL+\omega$ is the length of the ACs where $\ACL$ is always even and $\omega$ is either 0 or 1. Then, $\ACN$ is the number of ACs and $\ACP$ determines the probability $P = \frac{1}{\ACP}$ of obtaining a valid AC by a random guess, e.g. $\ACP=10.000$. We assume $\ACA < \ACP$.


\begin{definition}[Valid AC Scheme]
An AC scheme $\ACS = (\ACA,\ACN,\ACP,\ACL,\omega)$ is \emph{valid} when $\ACA^{\ACL} \geqslant \ACN \times \ACP \times \ACA^{-\omega}$ holds and $\ACL$ is even.
\end{definition}

A valid AC scheme $\ACS$ can be obtained as follows:

\begin{enumerate}[(a)]
	\item\label{it:sone} The user chooses the alphabet size $\ACA$, desired number of ACs $\ACN$ and some minimal guess probability $\frac{1}{P}$.
	\item Now the minimal length $\ACL$ is calculated such that $\ACA^{\ACL} >= \ACN \times \ACP$ by taking $\ACL = \left\lceil \lsup{\log(\ACN \times \ACP)}{\ACA}\right\rceil$
	\item $|\ACA^{\ACL} - \ACN \times \ACP|$ is minimized by taking $\ACP = \lfloor \ACA^{\ACL} / \ACN \rfloor$
	\item The length $\ACL$ can be either odd or even:
		\begin{itemize}
		\item When $\ACL = 2k + 1$ and $\ACA < \ACP$ then we adjust $\ACP$ such that $\ACA$ is a divisor of $\ACP$. As a consequence, we might have a larger number of ACs $\ACN$.
		$$\ACP = \ACP - (\ACP \bmod \ACA), \qquad \ACN = \lfloor \ACA^{\ACL}/\ACP \rfloor$$
		After these operations we obtain the system $\ACS = (\ACA, \ACN, \ACP, \ACL-1, 1)$.
		\item When $\ACL = 2k$ we obtain the system $\ACS = (\ACA, \ACN, \ACP, \ACL, 0)$.
		\end{itemize}
	\item The process is repeated from step~(\ref{it:sone}) when no valid system $\ACS$ is found.
\end{enumerate}

\subsection{Generation}
This section describes the generation of new ACs once a valid AC scheme is configured. Algorithm~\ref{alg_ac_enc} contains the pseudo code for AC generation.
The plaintext is an embedding $m$ of a part of $\mathrm{HMAC}(i)$ and $i$ itself. In case of an odd AC length ($\omega=1$) a small part $h'$ of $\mathrm{HMAC}(i)$ is excluded from this embedding. The part of $\mathrm{HMAC}(i)$ that is used in $m$ is determined by $\ACP$.
$$s = \mathrm{HMAC}(i) \bmod \ACP, \qquad h = \lfloor s \times \ACA^{-\omega} \rfloor, \qquad h' = s \bmod \ACA, \qquad m = h \times \ACN + i$$
The balanced Feistel construction is defined with:
$$k = \ACA^{(\ACL / 2)}, \qquad L = m \bmod k, \qquad R = \lfloor m\;/\;k\rfloor$$
$L$ and $R$ are input blocks with size $k$ of a balanced Feistel network. We denote the output blocks after $r$ rounds by $L^\star$ and $R^\star$. When the number of rounds $r$ is even the cryptogram $c$ is given by:
$$c = R^\star \times k + L^\star$$
When $r$ is odd, the left and right block are swapped and the cryptogram $c$ is given by:
$$c = L^\star \times k + R^\star$$
This difference between odd and even is there to allow the same construction for encoding and decoding.
Finally, the activation code $\alpha$ is given by:
$$\alpha = c \times \ACA^{\omega} + {\omega}h'$$

\subsection{Verification}
This section describes the verification of previously generated ACs for a valid AC scheme. Algorithm~\ref{alg_ac_dec} contains the pseudo code for AC verification.
Given an AC $\alpha$ and an AC scheme $\ACS$ the validity can be checked as follows. First compute $c$ and $h'$ from $\alpha$:
$$c = \left\lfloor \alpha / \ACA^{\omega} \right\rfloor, \qquad h' = \alpha \bmod \ACA^{\omega} $$
The balanced Feistel construction is defined with input block size $k = \ACA^{(\ACL / 2)}$. Now, the input blocks $L$ and $R$ are obtained from $c$ as follows:
$$L = c \bmod k, \qquad R = \left\lfloor c / k \right\rfloor$$

$L$ and $R$ are input blocks with size $k$ of a balanced Feistel network. We denote the output blocks after $r$ rounds by $L^\star$ and $R^\star$. In this case we want to decrypt and therefore use the round keys in reverse order. When the number of rounds $r$ is even the plaintext $m$ is given by:
$$m = R^\star \times k + L^\star$$
When $r$ is odd, the left and right block are swapped and the plaintext $m$ is given by:
$$m = L^\star \times k + R^\star$$
Now, we are able to obtain the partial HMAC $h$ and index $i$ from $m$ by:
$$h = \left\lfloor m / \ACN \right\rfloor, \qquad i = m \bmod \ACN$$
We calculate the partial HMAC $h_t$ and $h'_t$ like in the encoding, but now we use the recovered index $i$. Finally, we say that $\alpha$ is a valid AC iff $h_t = h$ and ${\omega}h'_t = {\omega}h'$.

\vspace{-15pt}
\noindent\begin{minipage}{\textwidth}
  \centering
  \begin{minipage}[t]{.45\textwidth}
    \centering
    \begin{algorithm}[H]
 
\scriptsize
\caption{\textsc{Generate}($i$)}
\label{alg_ac_enc}
\begin{algorithmic}
\STATE $k \gets \ACA^{(\ACL / 2)}$
\STATE $s \gets \mathrm{HMAC}(i) \bmod \ACP$
\STATE $h \gets \lfloor s \times \ACA^{-\omega} \rfloor$
\STATE $h' \gets s \bmod \ACA$
\STATE $m \gets h \times \ACN + i$
\STATE $L \gets m \bmod k; \; R \gets \lfloor m / k \rfloor$
\FOR {$j \gets 1$ \textbf{to} $r$}
\STATE $tmp \gets (L + f_{j}(R)) \bmod k$
\STATE $L \gets R; \; R \gets tmp$
\ENDFOR
\IF{$r$ is odd}
\STATE $c \gets L \times k + R$
\ELSE
\STATE $c \gets R \times k + L$
\ENDIF
\STATE $\alpha \gets c \times \ACA^{\omega} + {\omega}h'$
\quad\\
\quad\\
\quad\\
\quad\\
\quad\\
\quad
\end{algorithmic}
\end{algorithm}
  \end{minipage}
\quad
  \begin{minipage}[t]{.45\textwidth}
    \centering
\begin{algorithm}[H]
\scriptsize
\caption{\textsc{Verify}($\alpha$)}
\label{alg_ac_dec}
\begin{algorithmic}
\STATE $k \gets \ACA^{(\ACL / 2)}$
\STATE $c \gets \lfloor \alpha / \ACA^{\omega}\rfloor;\; h' \gets \alpha \bmod \ACA^{\omega}$
\STATE $L \gets c \bmod k;\; R = \lfloor c / k \rfloor$
\FOR {$j \gets r$ \textbf{to} $1$}
\STATE $tmp \gets (L + f_{j}(R)) \bmod k$
\STATE $L \gets R; \; R \gets tmp$
\ENDFOR
\IF{$r$ is odd}
\STATE $m \gets L \times k + R$
\ELSE
\STATE $m \gets R \times k + L$
\ENDIF

\STATE $h \gets \lfloor m / \ACN \rfloor;\;i \gets m \bmod \ACN$
\STATE $s \gets \mathrm{HMAC}(i) \bmod \ACP$
\STATE $h_t \gets \lfloor s \times \ACA^{-\omega} \rfloor$
\STATE $h'_t \gets s \bmod \ACA$

\IF{$h_t = h$ \textbf{and} ${\omega}h'_t = {\omega}h'$}
\STATE \textsc{Valid}
\ELSE
\STATE \textsc{Invalid}
\ENDIF
\end{algorithmic}
\end{algorithm}
  \end{minipage}
  \label{alg:twoalg}
\end{minipage}

\section{Example Application: Smart Card}\label{sec:smartcard}
In this section we want to give an example of an Activation Code System (ACS). In an ACS there are a few things that need to be managed. The index $i$ of the latest generated AC and the ACs that have been used so far. Since this information is highly valuable and represents financial value it must be well protected. Think of an application where the ACs are printed on prepaid cards covered by some scratch-off material. The production of these cards is a very secured and well-defined process to ensure that activation codes are kept secret during manufacturing. These cards need to have all kinds of physical properties, e.g. the AC should not be readable when the card is partly peeled off from the back. This can be achieved by printing a random pattern on top of the scratch-off foil.

At some point there is a very critical task to be executed when the ACs need to be delivered to the manufacturer. An obvious method to do this is to encrypt the list of ACs with a secret key. Later on in the process, this list of randomly generated codes needs to be maintained by the vendor who sells the scratch cards. This induces a big security threat since leakage of this list or unauthorized modification results in financial loss. Especially when it directly relates to the core business like in the telecommunications industry.

The use of a secure application module (SAM) significantly reduces this risk. A SAM is typically a tamper-resistant device, often a smart card, which is in most cases extensively tested and certified in accordance to a standard, e.g. the Common Criteria\footnote{\url{http://www.commoncriteriaportal.org}}. The elegance of the solution presented in this paper is that it can be implemented using smart cards. The supplier determines the probability $P$, number of codes $\ACN$, size of character set $\ACA$ and the key $K$ to be used. An obvious approach is to use two smart cards since the production and clearance of activation codes are very likely to happen at two different locations. Both smart cards are initialized using the same AC scheme $\ACS$ and the same key $K$. From that moment on the first one only gives out up to $\ACN$ new activation codes. The second one is used at the clearance house to verify and keep track of traded activation codes. This can be done by a sequence of bits where the $i$-th bit determines whether the $i$-th activation code has been cleared. For 1.000.000 activation codes approximately 122\,kB of storage is needed. This fits on a SmartMX card~\cite{smartmx} which is available with 144\,kB of EEPROM. Of course, multiple cards can be used if more ACs are needed.

\section{Analysis}\label{sec:analysis}
In this section we discuss the system parameters of BEPAC and decide on some minimal bounds and algorithms. We tested a 6-round BEPAC scheme for obvious flaws using the NIST random number test~\cite{rukhin2000statistical}. This test implementation also delivered the numbers in Table~\ref{tab:bepacconf} which give a good indication of the length $l$ of the codes compared to different AC scheme configurations. In the left column the desired values are given for the number of codes $\ACN$ and the guessing probability $\ACP$. We tested these different numbers for three different alphabet sizes $\ACA$.

\setlength{\tabcolsep}{0.3ex}
\begin{table}[t]
\begin{center}
\scriptsize
\begin{tabular}{|r|r||r|r|r|r||r|r|r|r||r|r|r|r|r|}
\hline
\multicolumn{2}{|c|}{\textbf{Desired}} & \multicolumn{4}{|c|}{\textbf{$\ACA=8$}} & \multicolumn{4}{|c|}{\textbf{$\ACA=20$}} & \multicolumn{4}{|c|}{\textbf{$\ACA=31$}}\\\hline
$\ACN$ 	& $\ACP$ 	& $\ACN$ 	& $\ACP (\times 10^{3})$ 		& $l$	& Bits 		& $\ACN$ 	& $\ACP (\times 10^{3})$ 		& $l$	& Bits 		& $\ACN$ 	& $\ACP (\times 10^{3})$ 		& $l$	& Bits 		\\\hline
$10^{1}$	& $10^{3}$	& $10^{1}$	  & 26,214 & 6  & 18 & $10^{1}$	& 16    & 4  & 18 & $10^{1}$      & 92,352  & 4	 & 20 \\\hline
$10^{2}$	& $10^{3}$	& $10^{2}$	  & 20,968 & 7  & 21 & $10^{2}$	& 32    & 5  & 22 & $10^{2}$	  & 286,285 & 5	 & 25 \\\hline
$10^{3}$	& $10^{3}$	& $10^{3}$	  & 16,777 & 8  & 24 & $10^{3}$	& 64    & 6  & 26 & $10^{3}$	  & 28,613  & 5	 & 25 \\\hline
$10^{4}$	& $10^{3}$	& 10.004	  & 13,416 & 9  & 27 & $10^{4}$	& 128   & 7  & 31 & $10^{4}$	  & 88,75   & 6	 & 30 \\\hline
$10^{5}$	& $10^{3}$	& 100.003	  & 10,737 & 10 & 30 & $10^{5}$	& 12,8  & 7  & 31 & $10^{5}$	  & 275,125 & 7	 & 35 \\\hline
$10^{6}$	& $10^{3}$	& 1.000.006	  & 68,719 & 12 & 36 & $10^{6}$	& 25,6  & 8  & 35 & 1.000.567	  & 27,497  & 7	 & 35 \\\hline
$10^{7}$	& $10^{3}$	& 10.001.379	  & 54,968 & 13 & 39 & $10^{7}$	& 51,2  & 9  & 39 & 10.000.012	  & 85,289  & 8	 & 40 \\\hline
$10^{8}$	& $10^{3}$	& 100.001.057	  & 43,98  & 14 & 42 & $10^{8}$	& 102,4 & 10 & 44 & 100.010.675	  & 264,368 & 9	 & 45 \\\hline
$10^{9}$	& $10^{3}$	& 1.000.010.575 & 35,184 & 15 & 45 & $10^{9}$	& 10,24 & 10 & 44 & 1.001.045.818 & 26,412  & 9	 & 45 \\\hline
\end{tabular}
\vspace{5pt}
\caption{BEPAC Configurations}\label{tab:bepacconf}
\end{center}
\vspace{-30pt}
\end{table}


\paragraph{Number of Rounds}
We found good arguments to set the minimum number of rounds to six for the BEPAC scheme. The literature shows that Feistel constructions of six or more rounds are secure against adaptive chosen plaintext and chosen ciphertext attacks (CPCA-2) when the number of queries $m \ll 2^{n}$, see~\cite{patarin2004security,knudsen2008security}. Patarin~\cite{patarin1998feistel} shows that an adversary needs at least $2^{3n/4}$ encryptions to distinguish a six-round Feistel construction from a random permutation.
A six round Feistel network sufficiently covers the risk of leaking serial number information, but this is of course a minimum.

\paragraph{Key Derivation}
In the BEPAC scheme we need different round keys for every Feistel round and another different key for the calculation of the HMAC on the serial. We propose to derive these keys from an initial randomly chosen key~\cite{barker2007recommendation} by a key derivation function (KDF). There are several definitions available for KDFs and we propose to use KDF1 which is defined in ISO\,18033-2~\cite{ISO18033-2-2006}. Recommendations for KDFs and their construction can also be found in~\cite{chen2009recommendation}. The first key that is derived is used as the key in the HMAC calculation of $h$ and $h'$ (Section~\ref{sec:bepac}). After this the round keys for the Feistel construction are successively derived.

\paragraph{Pseudo-random Functions}
Furthermore, we need to decide on the pseudo-random functions (PRFs) that are used as round functions of the Feistel network. The pseudo-randomness of the permutation defined by a Feistel network depends on the chosen PRF in each round~\cite{luby1996pac}. It is straightforward to use a cryptographic hash function since we already need a hash function for the HMAC~\cite{pub198198} calculation and it keeps our AC scheme simple.

\paragraph{Hash Function}
In the end, the BEPAC scheme is solely based on a single cryptographic hash function. We follow the secure hash standard FIPS\;180-3~\cite{pub2008180} and propose to use an approved hash function like SHA-256.

\section{Conclusions} \label{conclusions}
In this paper we have introduced activation codes (ACs), short codes of fixed length, that represent value. These ACs should be scalable, cost efficient and forgery resistant. In the literature, several solutions~\cite{blundo2002lpg,blundo2005secure,cimato-online,kumar-sales,jakobsson1999secure,chang2006secure,matsuyama1999distributed,terada2000copy,matsuyama1999distributed} handle digital coupons or tickets that are somehow reminiscent to our notion of ACs. The difference is that most solutions use public key cryptography or other means that result in lengthy codes. In fact, these solutions come closer to some extended notion of digital cash and are not meant to give a solution on the generation of ACs. To the best of our knowledge there is no scheme that focuses on the class of ACs that are described in this paper (roughly think of 20 to 50-bit codes). Our proposed AC scheme for this class satisfies authenticity and confidentiality in a way that when confidentiality is compromised it does not automatically break authenticity and vice versa.

In order to allow a relatively small and arbitrary message space for our AC scheme we use some of the ideas of Black and Rogaway~\cite{black2002caf} in their Generalized Feistel Cipher to satisfy the confidentiality in our scheme. Several studies~\cite{patarin2004security,patarin91new,naor1999cpp,luby1996pac,knudsen2008security,patarin1998feistel} show that the security bounds of Feistel constructions are not strong enough and thus make the use of Feistel constructions in small domains questionable. To illustrate this, we have demonstrated that CPA-2 allows an adversary to recover the complete permutation from only $2^{n+2}$ plaintext-ciphertext pairs. Still, the Feistel construction is suitable for the purpose of confidentiality in our AC scheme. Since confidentiality is a secondary goal, it relaxes the demands on the security bounds. Furthermore, in BEPAC the plaintext cannot be predicted which counters the attacks from the literature. And most important of all, a Feistel construction defines a permutation on the AC domain, which means in practice that we do not have to store any additional data in order to remember which ACs are already published and which are not.

To conclude, we found satisfactory system parameters for the minimum number of Feistel rounds and we defined a method to do the key derivation for the round keys. Furthermore, we suggested a specific pseudo-random function (PRF) and hash function for concrete implementations. Finally, we have implemented the BEPAC scheme\footnote{\url{http://www.cs.ru.nl/~gkoningg/bepac}} and performed some statistical tests using the NIST Random Number Test~\cite{rukhin2000statistical}. This test did not reveal any obvious flaws.

It might be interesting for future work to create a smart card implementation of BEPAC as suggested in Section~\ref{sec:smartcard}.

\bibliographystyle{plain}

\end{document}